\documentclass[12pt]{article}

\usepackage{geometry}
\geometry{a4paper,tmargin=3cm,bmargin=3cm,lmargin=3cm,rmargin=3cm}

\usepackage{enumitem}
\usepackage{cite}
\usepackage{graphicx}
\usepackage{subfigure}
\usepackage{amssymb}
\usepackage{amsmath}
\allowdisplaybreaks[1]

\usepackage{amsthm}

\newtheorem{theorem}{Theorem}
\newtheorem{lemma}[theorem]{Lemma}

\makeatletter
\def\@endtheorem{\endtrivlist}
\makeatother

\newcounter{brule}
\newenvironment{brule}{\refstepcounter{brule}\par\smallskip\noindent
\textbf{(B\arabic{brule})}\quad}{}
\newcommand{\currentrule}{B\arabic{brule}}

\newcommand{\Fmin}[1]{\mathcal{F}(#1)}

\begin{document}

\title{Faster parameterized algorithm for Bicluter Editing}
\author{Dekel Tsur%
\thanks{Ben-Gurion University of the Negev.
Email: \texttt{dekelts@cs.bgu.ac.il}}}
\date{}
\maketitle

\begin{abstract}
In the \textsc{Bicluter Editing} problem the input is a graph $G$ and an integer
$k$, and the goal is to decide whether $G$ can be transformed into a bicluster
graph by adding and removing at most $k$ edges.
In this paper we give an algorithm for \textsc{Bicluster Editing} whose running time is $O^*(3.116^k)$.
\end{abstract}

\paragraph{Keywords} graph algorithms, parameterized complexity,
branching algorithms.

\section{Introduction}
A graph $G$ is called a \emph{biclique} if $G$ is a complete bipartite graph,
namely, there is a partition of $V(G)$ into disjoint non-empty sets $V_1,V_2$
such that $V_1$ and $V_2$ are independent sets and there is an edge between
every vertex in $V_1$ and every vertex in $V_2$.
A graph $G$ is called a \emph{bicluster graph} if every connected component
of $G$ with at least two vertices is a biclique.
In the \textsc{Bicluter Editing} problem the input is a graph $G$ and an integer
$k$, and the goal is to decide whether $G$ can be transformed into a bicluster
graph by adding and removing at most $k$ edges.

Protti et al.~\cite{protti2006applying} gave an $O^*(4^k)$-time algorithm for
\textsc{Bicluter Editing}.
A faster algorithm, with $O^*(3.237^k)$ running time, was given by
Guo et al.~\cite{guo2008improved}.
In this paper, we give an algorithm for \textsc{Bicluter Editing} with
$O^*(3.116^k)$ running time.

\paragraph{Preleminaries}
For a set $S$ of vertices in a graph $G$, $G[S]$ is the subgraph
of $G$ induced by $S$ (namely, $G[S]=(S,E_S)$ where
$E_S = \{(u,v) \in E(G) \colon u,v \in S\}$).
For a graph $G = (V,E)$ and a set $F$ of pairs of vertices,
$G \triangle F$ is the graph $(V, (E \setminus F) \cup (F \setminus E))$.
A set $F$ of pairs of vertices is called an \emph{editing set} of a graph $G$
if $G \triangle F$ is a bicluster graph.
For a graph $G$, let $\Fmin{G}$ be a set containing every inclusion minimal
editing set of $G$.

A $P_4$ is a graph which consists of a path on 4 vertices.
Let $A$ be a set of vertices that induces a $P_4$ in a graph $G$.
Let
$I(A) = \{ v\in V(G) \setminus A \colon N(v)\cap A = \emptyset\}$
and 
$P(A) = V(G) \setminus (A\cup I(A)) =
\{ v\in V(G) \setminus A \colon N(v)\cap A \neq \emptyset\}$.

\section{The algorithm}

A graph is a bicluster graph if and only if it is bipartite and
it does not contain an induced $P_4$.
Therefore, the \textsc{Bicluster Editing} problem is closely related
to the \textsc{Cograph Editing} problem, which is the problem of deciding
whether a graph $G$ can be transformed to a graph without an induced $P_4$
by adding and removing at most $k$ edges.
Our algorithm for \textsc{Bicluster Editing} is based on the algorithm
of~\cite{Tsur_cograph} for \textsc{Cograph Editing}.

The algorithm is a branching algorithm (cf.~\cite{cygan2015parameterized}).
The algorithm uses the following branching rules.
\begin{brule}
Let $X$ be a set that induces a triangle.
For every edge $e$ in $G[X]$,
recurse on the instance $(G \triangle \{e\},k-1)$.\label{rule:triangle}
\end{brule}

The branching factor of Rule~(\currentrule) is $(1,1,1)$ and
the branching number is 3.

\begin{brule}
Let $A$ be a set that induces a $P_4$ such that $|P(A)| \geq 2$.
Choose distinct vertices $p,p' \in P(A)$.
For every $F \in \Fmin{G[A\cup \{p,p'\}]}$,
recurse on the instance $(G \triangle F,k-|F|)$.\label{rule:P}
\end{brule}

To compute the branching number of Rule~(\currentrule), we consider all
possible cases for the neighbors of $p$ in $A$ and all possible cases
for the neighbors of $p'$ in $A \cup \{p\}$. Note that the number of cases
is finite.
For each case, we used a Python script to compute $\Fmin{G[A\cup \{p,p'\}]}$
and to compute the corresponding branching number.
The case with the largest branching number is when $p$ is adjacent to the
first vertex of the path and $p'$ is adjacent to the second vertex of the
path. Additionally, $p$ and $p'$ are not adjacent.
In this case, the branching vector is $(2,2,2,2,2,2,2,2,3,3,3,3,3,4)$
and the branching number is at most 3.116.

\begin{brule}
Let $A$ be a set that induces a $P_4$ such that there is
a vertex $p \in P(A)$ that is adjacent to a vertex $i \in I(A)$.
For every $F \in \Fmin{G[A\cup \{p,i\}]}$,
recurse on the instance $(G \triangle F,k-|F|)$.\label{rule:PI}
\end{brule}

The branching number of Rule~(\currentrule) was also computed with
a script. The branching number of this rule is at most 3.116.

We now consider an instance of the problem on which the above
branching rules cannot be applied.
We will show that the instance can be solved in polynomial time.

\begin{lemma}
Let $(G,k)$ be an instance of \textsc{Bicluster Editing} on which
Rules~(B\ref{rule:triangle})--(B\ref{rule:PI}) cannot be applied.
Then, every connected component of $G$ with at least 6 vertices
is a biclique.\label{lem:reduced}
\end{lemma}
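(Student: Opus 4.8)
The plan is to prove the contrapositive locally: assuming that some connected component $C$ of $G$ with $|V(C)| \ge 6$ is \emph{not} a biclique, I would exhibit one of Rules~(B\ref{rule:triangle})--(B\ref{rule:PI}) that can be applied to $(G,k)$, contradicting the hypothesis. First, since Rule~(B\ref{rule:triangle}) is not applicable, $G$ has no triangle, so in particular $C$ is triangle-free. Using the fact recalled in the excerpt that a bicluster graph is exactly a bipartite $P_4$-free graph, the next step is to observe that a connected triangle-free graph that is not a biclique must contain an induced $P_4$; equivalently, every connected $\{K_3,P_4\}$-free graph on at least two vertices is a complete bipartite graph. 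A short way to see this: a connected $P_4$-free graph on at least two vertices is the join of two non-empty cographs, and if such a graph is triangle-free neither part can contain an edge, so the graph is complete bipartite. Fix a set $A \subseteq V(C)$ inducing a $P_4$.

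Because $A$ is contained in the connected component $C$, every neighbour of a vertex of $A$ lies in $C$, hence $P(A) \subseteq V(C)$ and $I(A) \cap V(C) = V(C) \setminus (A \cup P(A))$; also $P(A) \neq \emptyset$ since $C$ is connected and $A \subsetneq V(C)$. If $|P(A)| \ge 2$, then Rule~(B\ref{rule:P}) applies with this set $A$, a contradiction. Hence I may assume $P(A) = \{p\}$ for a single vertex $p$. Since $|V(C)| \ge 6 > 5 = |A \cup \{p\}|$, there is a vertex $i_0 \in V(C) \setminus (A \cup \{p\}) = V(C) \setminus (A \cup P(A)) = I(A) \cap V(C)$, so $i_0 \in I(A)$.

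The remaining and most delicate step is to turn the connectivity of $C$ into an explicit edge between $P(A)$ and $I(A)$, so that Rule~(B\ref{rule:PI}) applies. For this I would take a shortest path $v_0 v_1 \cdots v_t$ in $C$ from $v_0 = i_0$ to the set $A$, so that $v_t \in A$ and $v_0,\dots,v_{t-1} \notin A$; its length satisfies $t \ge 2$ because $i_0 \notin P(A)$. The vertex $v_{t-1}$ is outside $A$ and adjacent to $v_t \in A$, so $v_{t-1} \in P(A) = \{p\}$, i.e.\ $v_{t-1} = p$. Consequently $v_{t-2}$ exists and is adjacent to $p$; moreover $v_{t-2}$ lies outside $A$ and is not adjacent to $A$ (otherwise $P(A)$ would contain the vertex $v_{t-2} \ne p$), so $v_{t-2} \in I(A)$. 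Now Rule~(B\ref{rule:PI}) is applicable with the set $A$, the vertex $p \in P(A)$, and the vertex $v_{t-2} \in I(A)$, which is the desired contradiction. I expect the two non-routine ingredients to be the $\{K_3,P_4\}$-free classification used to locate the $P_4$ inside $C$, and this shortest-path argument; the case analysis on $|P(A)|$ itself is immediate.
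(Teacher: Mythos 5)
Your proof is correct and rests on the same ingredients as the paper's: the Seinsche-type cograph decomposition (your ``connected $\{K_3,P_4\}$-free graphs are complete bipartite'' is the paper's Seinsche citation plus its triangle argument), triangle-freeness from Rule~(B\ref{rule:triangle}), and the non-applicability of Rules~(B\ref{rule:P}) and~(B\ref{rule:PI}) to handle an induced $P_4$. The only difference is organizational: the paper shows that $|P(A)|\le 1$ and the absence of $P(A)$--$I(A)$ edges would make $A\cup P(A)$ a separate component (contradicting connectivity), whereas you run the contrapositive explicitly, extracting the $P(A)$--$I(A)$ edge from a shortest path; these are the same argument in mirror image.
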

\begin{proof}
Without loss of generality, we assume that $G$ is connected.
We also assume that $G$ has at least 6 vertices.
We first claim that $\overline{G}$ is not connected.
Suppose conversely that $\overline{G}$ is connected.
By a result of Seinsche~\cite{seinsche1974property},
$G$ contains an induced $P_4$,
and let $A$ be a set of vertices that induces a $P_4$ in $G$.
Since $G$ has at least 6 vertices and $|P(A)| \leq 1$
(due to Rule~(B\ref{rule:P})), we have that $I(A) \neq \emptyset$.
By definition, a vertex $i \in I(A)$ is not adjacent to the vertices in $A$.
Additionally, if $P(A) \neq \emptyset$, a vertex $i \in I(A)$ is not adjacent to
the single vertex in $P(A)$ (due to Rule~(B\ref{rule:PI})).
Therefore, $A \cup P(A)$ is a connected component in $G$ which does not contain
all the vertices of $G$, contradicting the assumption that $G$ is connected.
Therefore, $\overline{G}$ is not connected.

Let $C_1,\ldots,C_p$ be the connected components of $\overline{G}$.
In $G$, a vertex in a connected component $C_i$ is adjacent to all the vertices
in $V(G)\setminus C_i$.
Since $G$ does not contain a triangle (due to Rule~(B\ref{rule:triangle})),
it follows that $p = 2$.
Additionally, the sets $C_1$ and $C_2$ are independent sets.
Therefore, $G$ is a biclique.
\end{proof}

\begin{lemma}
Let $G$ be a graph. If $F$ is a minimum size editing set of $G$ then
for every $(u,v)\in F$, $u$ and $v$ belong to the same connected component of
$G$.\label{lem:cc}
\end{lemma}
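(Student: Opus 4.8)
The plan is to take a minimum size editing set $F$ and argue that the pairs of $F$ whose two endpoints lie in distinct connected components of $G$ are superfluous: deleting them yields another editing set that is no larger, so by minimality there were none to begin with.

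Concretely, let $C_1,\dots,C_m$ be the connected components of $G$, and for $i=1,\dots,m$ let $F_i$ be the set of pairs in $F$ with both endpoints in $C_i$; set $F' = F_1 \cup \dots \cup F_m$. The sets $F_i$ are pairwise disjoint subsets of $F$, so $F' \subseteq F$ and $|F'| = \sum_{i} |F_i| \le |F|$, with equality exactly when $F' = F$, i.e.\ exactly when no pair of $F$ joins two different components. It remains to show that $F'$ is an editing set. Since $G$ has no edge between distinct components and $F'$ adds none, $G \triangle F'$ is the disjoint union of the graphs $G[C_i] \triangle F_i$ over $i=1,\dots,m$. A direct check of edge sets gives $G[C_i] \triangle F_i = (G\triangle F)[C_i]$, because for $u,v \in C_i$ one has $(u,v)\in E(G)$ iff $(u,v)\in E(G[C_i])$, and $(u,v)\in F$ iff $(u,v)\in F_i$. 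Recall from the start of the section that a graph is a bicluster graph if and only if it is bipartite and has no induced $P_4$. Both properties are hereditary, so each $(G\triangle F)[C_i]$ is a bicluster graph since $G\triangle F$ is; and both are preserved under disjoint unions (bipartiteness is componentwise and a $P_4$ is connected), so $G\triangle F'$ is a bicluster graph. Hence $F'$ is an editing set of $G$.

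Since $F$ has minimum size, $|F| \le |F'| \le |F|$, so $F' = F$, and as noted this means every pair of $F$ has both endpoints in the same connected component of $G$. I do not expect a genuine obstacle here; the only points needing a little care are the identity $(G\triangle F)[C_i] = G[C_i]\triangle F_i$ and the two closure properties of the class of bicluster graphs — heredity and closure under disjoint union — which together are what make $F'$ an editing set.
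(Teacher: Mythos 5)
Your argument is correct and is essentially the paper's own proof: restrict $F$ to its within-component pairs $F'$, use heredity to see each $(G\triangle F')[C_i]=(G\triangle F)[C_i]$ is a bicluster graph, use closure under disjoint union to conclude $F'$ is an editing set, and invoke minimality to get $F'=F$. Your extra verification of the identity $(G\triangle F)[C_i]=G[C_i]\triangle F_i$ and of the two closure properties just makes explicit what the paper leaves implicit.
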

\begin{proof}
Let $F$ be a minimum size editing set of $G$.
Let $F'$ be a set containing every pair $(u,v) \in F$ such that $u,v$ belong
to the same connected component of $G$.
Since being a bicluster graph is a hereditary property, for every
connected component $C$ of $G$ we have that
$(G\triangle F)[C] = (G\triangle F')[C]$ is a bicluster graph.
Since the disjoint union of bicluster graphs is a bicluster graph,
$G\triangle F'$ is a bicluster graph.
Since $F$ is a minimum size editing set of $G$, it follows that $F' = F$.
\end{proof}

By Lemma~\ref{lem:reduced} and Lemma~\ref{lem:cc},
if $(G,k)$ is an instance on which the above branching rules cannot be applied
then the instance can be solved in polynomial time:
Let $C_1,\ldots,C_p$ be the connected components of $G$.
For every $i$ such that $|C_i| \leq 5$, compute a minimum size editing set
$F_i$ of $G[C_i]$ using brute force.
For every $i$ such that $|C_i| \geq 6$, let $F_i = \emptyset$.
Then, $F = \bigcup_{i=1}^p F_i$ is a minimum size editing set of $G$.

We obtain the following theorem.
\begin{theorem}
There is an $O^*(3.116^k)$-time algorithm for \textsc{Bicluster Editing}.
\end{theorem}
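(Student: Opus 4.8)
The plan is to present the branching algorithm explicitly and then verify correctness and running time by induction on $k$. On input $(G,k)$ the algorithm first returns ``no'' if $k<0$. Otherwise it checks whether any of Rules~(B\ref{rule:triangle})--(B\ref{rule:PI}) applies (this requires searching for a triangle or an induced $P_4$ and, in the latter case, computing $P(A)$ and $I(A)$); if one applies, it performs the corresponding branching and returns ``yes'' iff at least one recursive call returns ``yes''. If none applies, it runs the polynomial-time procedure described after Lemma~\ref{lem:cc}: compute the connected components $C_1,\dots,C_p$, set $F_i$ to a minimum editing set of $G[C_i]$ computed by brute force when $|C_i|\le 5$ and $F_i=\emptyset$ when $|C_i|\ge 6$, and return ``yes'' iff $\sum_{i=1}^{p}|F_i|\le k$. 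Correctness of this final step is exactly the combination of Lemma~\ref{lem:reduced} and Lemma~\ref{lem:cc}.

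For correctness I would show, by induction on $k$, that the algorithm answers ``yes'' iff $(G,k)$ is a yes-instance; the base cases ($k<0$, or no rule applicable) are handled by the description above. The inductive step reduces to showing that each branching rule is \emph{safe}: no generated instance is a yes-instance unless $(G,k)$ is (immediate, since $G\triangle F$ differs from $G$ by $|F|$ edits), and, conversely, if $(G,k)$ is a yes-instance then some generated instance is. For the latter, let $F^*$ be a minimum editing set of $G$ with $|F^*|\le k$ and let $H$ be the constant-size induced subgraph used by the rule ($G[X]$, $G[A\cup\{p,p'\}]$, or $G[A\cup\{p,i\}]$). Since being a bicluster graph is hereditary, the restriction of $F^*$ to pairs inside $V(H)$ is an editing set of $H$, so it contains some inclusion-minimal editing set $F\in\Fmin{H}$ (for Rule~(B\ref{rule:triangle}) one checks directly that $\Fmin{G[X]}$ is exactly the set of singletons of edges of the triangle). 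Then $F\subseteq F^*$, hence $F$ and $F^*\setminus F$ are disjoint and $(G\triangle F)\triangle(F^*\setminus F)=G\triangle F^*$ is a bicluster graph; thus $F^*\setminus F$ is an editing set of $G\triangle F$ of size $|F^*|-|F|\le k-|F|$, so $(G\triangle F,\,k-|F|)$ is a yes-instance, and it is one of the instances the rule recurses on.

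For the running time, let $T(k)$ bound the number of leaves of the recursion tree on an instance with parameter $k$. A branching step decreases the parameter by the entries of the rule's branching vector, namely by $1$ in each of the three branches of Rule~(B\ref{rule:triangle}) and by at least $2$ in every branch of Rules~(B\ref{rule:P}) and~(B\ref{rule:PI}) (that every $F\in\Fmin{H}$ in these two rules has $|F|\ge 2$ is part of what the scripts enumerate, and the listed branching vector $(2,2,2,2,2,2,2,2,3,3,3,3,3,4)$ reflects this). By the stated branching-number computations, each of the three rules has branching number at most $3.116$, so $T(k)\le 3.116^{k}$. The work at each node is polynomial: detecting a triangle or induced $P_4$ and computing $I(A),P(A)$ are polynomial, while for each rule $\Fmin{H}$ is the set of minimal editing sets of a graph on at most six vertices and hence has constant size and is computed in constant time; the leaf procedure is polynomial as well. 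Therefore the total running time is $O^*(3.116^{k})$, which together with correctness establishes the theorem.

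I expect the subtle point to be the safety argument — specifically, justifying that recursing over the minimal editing sets $\Fmin{H}$ of the bounded-size subgraph $H$, rather than over all editing sets, loses nothing; this is precisely what the identity $(G\triangle F)\triangle(F^*\setminus F)=G\triangle F^*$ for $F\subseteq F^*$ delivers. A secondary point that must be stated carefully is that the branching-vector entries are genuine lower bounds on the parameter drop, i.e.\ that no $F\in\Fmin{H}$ in Rules~(B\ref{rule:P}) and~(B\ref{rule:PI}) is a single edit; this is intuitively clear because one edit cannot simultaneously destroy the induced $P_4$ on $A$ and the extra obstruction forced by $|P(A)|\ge 2$ or by the edge between $p$ and $i$, and it is confirmed exhaustively by the script over the finitely many neighborhood configurations.
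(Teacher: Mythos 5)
Your proposal is correct and follows essentially the same route as the paper: the same three branching rules with the script-computed branching numbers, and the same terminal polynomial-time step justified by Lemma~\ref{lem:reduced} and Lemma~\ref{lem:cc}. The only difference is that you spell out the standard safety argument (extracting a minimal editing set $F\in\Fmin{H}$ from the restriction of an optimal $F^*$ to the bounded subgraph and using $(G\triangle F)\triangle(F^*\setminus F)=G\triangle F^*$) and the induction on $k$, which the paper leaves implicit.
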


\bibliographystyle{abbrv}
\bibliography{bicluster,dekel}

\end{document}